%% file: root.tex
\documentclass[letterpaper, 10 pt, conference]{ieeeconf}  

\newif\iftechrep
\global\techrepfalse
\global\techreptrue

\IEEEoverridecommandlockouts                              

\input{packages}

\title{\LARGE \bf
Coalition Formation with Limited Information Sharing for Local Energy Management*
}

\author{Luke Rickard$^{1}$, Paola Falugi$^{1}$ and Eric C. Kerrigan$^{2}$
\thanks{*This work was supported by EPSRC grant EP/Z536106/1}
\thanks{$^{1}$University of East London $^{2}$Imperial College London}%
}

\begin{document}

\maketitle
\thispagestyle{empty}
\pagestyle{empty}

\begin{abstract}

Distributed energy systems with prosumers require new methods for coordinating energy exchange among agents. 
Coalitional control provides a framework in which agents form groups to cooperatively reduce costs; however, existing bottom-up coalition-formation methods typically require full information sharing, raising privacy concerns and imposing significant computational overhead.

In this work, we propose a limited information coalition-formation algorithm that requires only limited aggregate information exchange among agents. 
By constructing an upper bound on the value of candidate coalitions, we eliminate the need to solve optimisation problems for each potential merge, significantly reducing computational complexity while limiting information exchange. 
We prove that the proposed method guarantees cost no greater than that of decentralised operation.

Coalition strategies are optimised using a distributed approach based on the Alternating Direction Method of Multipliers (ADMM), further limiting information sharing within coalitions. 
We embed the framework within a model predictive control scheme and evaluate it on real-world data, demonstrating improved economic performance over decentralised control with substantially lower computational cost than full-information approaches.
\end{abstract}

\input{sections/1-introduction}
\input{sections/2-energy_distribution}
\input{sections/3-coalitions}
\input{sections/4-MPC}
\input{sections/5-results}
\input{sections/6-conclusion}
\balance

%








\bibliographystyle{ieeetr}
\bibliography{bibliography}

\end{document}

%% file: packages.tex
\let\labelindent\relax
\usepackage{enumitem}
\usepackage{hyperref}
\usepackage{balance} 

\usepackage{graphics} 
\usepackage{epsfig} 
\usepackage{mathptmx} 
\usepackage{times} 
\usepackage{amsmath} 
\usepackage{amssymb}  
\usepackage{amsfonts}
\usepackage{xcolor}

\usepackage[noend]{algpseudocode}
\usepackage{algorithm}

\newtheorem{definition}{Definition}
\newtheorem{assumption}{Assumption}
\newtheorem{corollary}{Corollary}
\newtheorem{theorem}{Theorem}

\usepackage{subcaption}
\usepackage{tikz}

\DeclareMathOperator*{\argmin}{arg\,min}
\newcommand{\G}{\mathsf{G}}
\newcommand{\allu}{\mathsf{u}}
\newcommand{\C}{\mathsf{C}}

%% file: sections/1-introduction.tex
\section{Introduction}

Energy production around the world is currently undergoing a significant paradigm shift from fossil fuels to renewable energy sources.
One part of this transition is the change from relying on a few large, high-capacity power plants to depending on many smaller, widely distributed generation units~\cite{NADEEM2023101096}. 
In particular, we are interested in the rise of ``prosumers'', participants in the energy market who both consume and produce electricity~\cite{FILHO2024100158,Parag2016}.
This motivates exploring novel control techniques better suited to the new energy generation landscape.

One promising area of research that leverages this shift in generation is \emph{coalitional control}~\cite{7823093,DBLP:conf/amcc/FeleMC15}.
Coalitional control is an approach that considers grouping entities involved in a large problem into separate coalitions that can be solved separately. 
We refer to this initial grouping of agents, before the subproblems are solved, as coalition formation.
For the grid scenario, coalitional control can involve buildings agreeing to trade energy with one another, aiming to pay less than they would by interacting only with the grid.


Forming coalitions allows agents to reduce their individual costs without incurring large computational or communication overheads, and without sharing full local information with agents outside the coalition (once coalitions are formed). 
Revealing private information (such as usage data) in the case of the smart grid, can lead to vulnerabilities, allowing attackers to infer various household traits~\cite{giaconi2018privacy,razavi2018rethinking}.

However, while recent work that has explored bottom-up coalition formation~\cite{DBLP:conf/amcc/FeleMC15} addresses computational complexity, it still requires full information sharing during coalition formation, negating these privacy benefits. 
In particular, the technique in~\cite{DBLP:conf/amcc/FeleMC15} required solving the complete optimisation problem for every pair of agents, requiring each agent to broadcast all optimisation variables and constraints to all other agents. 
Not only does this introduce privacy concerns, but it also represents a computational bottleneck. 
We seek a ``limited information sharing'' scheme that avoids sharing full local demand, generation, storage, and constraint data. 
Instead, agents only share tentative grid trade plans during coalition formation, and coalition trade plans afterwards. 
This reduces information shared and computational costs.

We make use of Model Predictive Control (MPC) techniques to optimise energy trades and battery charging profiles, with forecasted energy usage and production for a day ahead. 
MPC has been studied in this context in~\cite{oldeWurtel2012mpc}, but it is limited to a single building, whereas we consider a group of buildings that may form into coalitions. 
Our work extends upon current extensions of MPC to coalitions~\cite{baldivieso2021coalitional,sanchez2023robust} by reducing the amount of information shared (and computational requirements) for each possible coalition formation step, and dynamically updates coalitions at every time step to account for changing conditions.

Coalitional control aligns with techniques in distributed control, such as the Alternating Direction Method of Multipliers~\cite{DBLP:journals/ftml/BoydPCPE11}, explored for energy grids in~\cite{doi:10.1049/PBPO149E_ch5,DBLP:journals/tsg/ZhangHDDX18}. 
While distributed control provides a powerful framework for optimising systems composed of multiple agents, it typically assumes a fixed interaction structure and does not address how agents should be grouped. 
In many applications, forming such groups is desirable to limit information sharing, reduce computational burden, or manage communication costs. 
Coalitional control addresses this by determining groups of agents (coalitions) that can then optimise their decisions jointly, e.g., using distributed optimisation techniques.

In this work, we propose a coalition formation framework that reduces both computational complexity and information sharing. The approach enables agents to evaluate potential coalition merges using only limited, aggregate information, without requiring full knowledge of local models or solving a full optimisation problem for each candidate merge. The main contributions of this paper are as follows:

\begin{itemize}

\item We show that exact optimisation of candidate coalition merges can be replaced by a valid upper bound on coalition value, avoiding the need to solve an optimisation problem for every candidate merge.
\item For local energy management, we construct an efficiently computable upper bound requiring only aggregate grid-trade information, thereby reducing information disclosure.
\item We prove that the resulting coalition-formation algorithm terminates in a finite number of iterations and produces a coalition structure with total cost no greater than that of decentralised operation.
\item We integrate the proposed coalition-formation scheme with distributed optimisation via ADMM and MPC.
\item Numerical results on real-world data demonstrate improved performance over decentralised control and lower computational cost than full-information methods.

\end{itemize}

While we focus on energy markets in this paper, our technique is generally applicable to all trading games, and we discuss how it may be extended to allow a trade-off between optimality and information sharing. 
All proofs are available in~\cite{TechRep}. 

\subsection{Notation}

Where we have a sequence $\{G(t)\}_{t=1}^T$, we use $\G$ to refer to the entire sequence collectively (i.e. $\G=\{G(t)\}_{t=1}^T$).

\begin{figure}
\centering
\includegraphics[width=\linewidth]{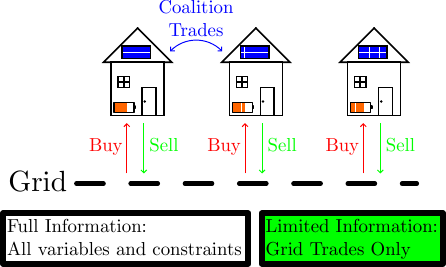}
\caption{Overview of the energy system with coalitions. 
Buildings may trade with the grid, or with other buildings within their coalition (but not those outside the coalition). 
Our limited-information approach shares only planned grid trades, whereas full-information methods require sharing all optimisation variables and constraints.}
\label{fig:schematic}
\end{figure}

%% file: sections/2-energy_distribution.tex
\section{Background}
\subsection{Building Model}

We consider a collection of $N$ buildings $\mathcal{M}_1,\dots,\mathcal{M}_N$, each with some fixed demand and generation profiles $d_i(t), g_i(t)\in\mathbb{R}_{\geq 0}$, corresponding to the energy consumed, and produced, at each time step $t$ across the time horizon $\mathcal{T}=\{1,\dots,T\}$.
Each building has a battery with initial state $\mathrm{SoC}_i(1)=\mathrm{SoC}_i^\mathrm{init}$, whose state $\mathrm{SoC}_i(t) \in [0,\overline{\mathrm{SoC}}_i]$ evolves according to the following dynamics
    \begin{equation}
    \label{eq:SoC}
        \mathrm{SoC}_i(t+1) = \mathrm{SoC}_i(t)+\rho^c_iu^c_i(t) - \frac{1}{\rho^d_i}u^d_i(t),
    \end{equation}
    where $\rho^c_i,\rho_i^d \in (0,1)$ are charge and discharge efficiencies respectively, and $u_i = (u^c_i,u^d_i) \in [0,\overline{u}_i]^2$ are charge and discharge signals. 
    The building may trade energy with the grid by purchasing or selling energy to the grid at each time step $G_i(t)= (G_i^\mathrm{buy}(t),G_i^\mathrm{sell}(t)) \in \mathbb{R}^2_{\geq 0}$, and we refer to the total trade with the grid as $G^\mathrm{tot}_i(t) = G^\mathrm{buy}_i(t)-G^\mathrm{sell}_i(t)$.
    At each time step, the building $i$ must satisfy a power balance
    \begin{equation}
    \label{eq:power_balance}
        d_i(t)+u^c_i(t)=g_i(t)+u^d_i(t)+G^\mathrm{tot}_i(t).
    \end{equation}

We then define a linear cost function for each building as
\begin{equation}
    L_i(\G_i) = \sum_{t=1}^T \left(P^\mathrm{buy}(t)G_i^\mathrm{buy}(t)-P^\mathrm{sell}(t)G_i^\mathrm{sell}(t)\right),
\end{equation}
where $P^\mathrm{buy}(t),P^\mathrm{sell}(t)$ are the prices offered by the grid for buying and selling electricity. 
We assume $P^\mathrm{buy}(t)>P^\mathrm{sell}(t)>0$ for all $t$.  
Under this assumption, and with $\rho_c,\rho_d \in (0,1)$, simultaneous charging and discharging is suboptimal~\cite{DBLP:journals/corr/abs-2104-06267}.
Finally, we denote the feasible set of variables $\{\G_i,\allu_i\}$ by 
\begin{align}
        \mathfrak{X}_i=\{&\G_i\in\mathbb{R}^{2\times T}_{\geq0},\allu_i\in[0,\overline{u}_i]^{2\times T} \colon\nonumber\\
        &d_i(t)+u^c_i(t)=g_i(t)+u^d_i(t)+G^\mathrm{tot}_i(t)\;\forall t\in \mathcal{T},\nonumber\\
            &\mathrm{SoC}_i(t+1) = \mathrm{SoC}_i(t)+\rho^c_iu^c_i(t) - \frac{1}{\rho^d_i}u^d_i(t) \; \forall t\in \mathcal{T},\nonumber\\
            &\mathrm{SoC}_i(t) \in [0,\overline{\mathrm{SoC}}_i]\; \forall t\in \mathcal{T}\cup\{T+1\},\\
            &\mathrm{SoC}_i(1)=\mathrm{SoC}_i^\mathrm{init}\}\nonumber.
\end{align}

\subsection{Open-loop Optimisation}

We first seek to minimise the cumulative cost for all buildings $\sum_{i=1}^N L_i(\G_i)$ in an open-loop fashion, we later consider how to extend to model predictive control in Section~\ref{sec:MPC}.
We therefore define an optimisation problem
\begin{equation}
    \begin{aligned}
        \min_{\{\G_i,\allu_i\}\in \mathfrak{X}_i, i=1,\ldots,N}&\sum_{i=1}^N L_i(\G_i)\\
    \end{aligned}
\end{equation}

\subsubsection{Decentralised}


We can naturally separate this problem into $N$ linear programs  
\begin{equation}
    \begin{aligned}
        \sum_{i=1}^N \min_{\{\G_i,\allu_i\}\in \mathfrak{X}_i}&L_i(\G_i)\\
    \end{aligned}
\end{equation}
which can then be solved directly using off-the-shelf optimisation packages. 
However, we can improve upon this optimum by allowing for buildings to form into groups, which we call coalitions, in which energy is freely traded to benefit all members of the coalition.

\subsection{Coalitions}

We consider the problem of forming a coalitional structure, defined in the sequel, to reduce the shared cost of all agents.

\begin{definition}[Coalition Structure]
\label{def:CS}
A coalition structure $\mathcal{C}=\{\mathcal{C}_1,\dots,\mathcal{C}_M\}$, denotes a disjoint partitioning of the complete set of building indices
\begin{equation}
    \bigcup_{j=1}^M\mathcal{C}_j=\{i\}_{i=1}^N, \qquad \mathcal{C}_i\cap\mathcal{C}_j=\emptyset, \ i,j = 1,\dots,M, i\neq j.
\end{equation}
\end{definition}
Note that a coalition $\mathcal{C}_j$ may be a collection of buildings, or a single building.

We now introduce the additional optimisation variables $C_i^\mathrm{buy}(t),C_i^\mathrm{sell}(t)\in \mathbb{R}_{\geq 0}$ for each building, which allow that building to buy from, and sell to, its coalition. 
Again, we use $C_i^\mathrm{tot}(t) = C_i^\mathrm{buy}(t)-C_i^\mathrm{sell}(t)$ to refer to the total traded with the coalition, and $C_i(t)=(C_i^\mathrm{buy}(t),C_i^\mathrm{sell}(t))$ to refer to the pair. 
The power balance in~\eqref{eq:power_balance} is then amended to  
    \begin{equation}
    \label{eq:power_balance_in_coal}
        d_i(t)+u^c_i(t)=g_i(t)+u^d_i(t)+G_i^\mathrm{tot}(t)+C_i^\mathrm{tot}(t),
    \end{equation}
    and the constraint set $\mathfrak{X}_i$ is amended similarly (with other terms unaffected).

For a given coalition $\mathcal{C}_j \in \mathcal{C}$, we can then define the optimisation problem as
\begin{equation}\label{eq:opt_in_coal_ind}
    \begin{aligned}
        V(\mathcal{C}_j)=\min_{\{\G_i,\allu_i,\C_i\}\in\mathfrak{X}_i, i\in\mathcal{C}_j}&\sum_{i\in\mathcal{C}_j} L_i(\G_i)\\
        \mathrm{subject \; to}& \;  \sum_{i\in\mathcal{C}_j} C_i^\mathrm{tot}(t)=0, \; \forall t\in\mathcal{T},
    \end{aligned}
\end{equation}
with a new constraint to maintain a power balance for the coalitional trades.
For the entire coalition structure, we then seek to minimise the cumulative value $\sum_{j=1}^MV(\mathcal{C}_j).$
    

We now have $M\leq N$ linear programs, with the new constraint preventing further separation.
We assume that trades within a coalition are settled at 0 cost; this allows us to focus on the limited information coalition formation mechanism and derive a simple achievable upper bound. 
Extensions to unequal internal trading costs (e.g. to account for transmission costs) are left for future work.




%% file: sections/3-coalitions.tex
\section{Coalition Formation}
\label{sec:coal_form}

In order to optimise the cumulative cost of all agents, it is also necessary to optimise over coalition structures. 
We now consider a number of techniques for performing this optimisation, with varying degrees of information sharing. 
We assume that there is an upper bound on the size of a coalition~$\mathcal{C}_\mathrm{max}$; this constraint may be introduced to limit the number of agents with whom operational information is shared, or to lower communication/computational overhead.

\subsection{Optimal (Size-Limited) Coalition}

First, we consider the problem of directly obtaining the true optimal coalition, hence we seek to solve 
\begin{equation}
\label{eq:coal_brute}
    \begin{aligned}
        \min_{\mathcal{C}}&\sum_{\mathcal{C}_j\in\mathcal{C}}V(\mathcal{C}_j)\\
        \mathrm{subject \; to} & \;\bigcup_{j=1}^M\mathcal{C}_j=\{i\}_{i=1}^N\\
        & \;\mathcal{C}_j\cap\mathcal{C}_k=\emptyset,\; j,k = 1,\dots,M, k\neq j\\
        &\;|\mathcal{C}_j| \leq \mathcal{C}_\mathrm{max}, \;j=1,\dots,M,
    \end{aligned}
\end{equation}
which can be formulated as a mixed-integer linear program (due to the optimisation over coalition structures $\mathcal{C}$), since we now optimise also over each building's membership of coalitions. 
However, this approach is combinatorial in the number of agents (requiring iteration over all possible partitions of a set).

\subsection{Bottom-Up Coalition Formation~\cite{DBLP:conf/amcc/FeleMC15}}

One less computationally expensive approach to approximately solve the coalition formation is proposed in~\cite{DBLP:conf/amcc/FeleMC15}. 
The algorithm is summarised in Algorithm~\ref{alg:bottom_up}.
\begin{algorithm}[tb!]
\caption{Bottom-Up Coalition Formation}\label{alg:bottom_up}
\begin{algorithmic}[1]
\State $\mathcal{A}_\mathrm{new} \gets \{\mathcal{M}_1,\dots,\mathcal{M}_N\}$
\Repeat
\State $\mathcal{A} \gets \mathcal{A}_\mathrm{new}$
\ForAll{$a \in \mathcal{A}$}
\State $\{\G^\ast_i, \allu^\ast_i, \C^\ast_i\}_{i\in a} \gets \argmin_{\{\G_i,\allu_i,\C_i\}_{i\in a}}\sum_{i \in a}L_i(\G_i)$\Comment{Subject to $\{\G_i,\allu_i,\C_i\} \in \mathfrak{X}_i$ and $\sum_{i\in a}C_i^\mathrm{tot}(t)=0,t\in\mathcal{T}$}
\EndFor
\ForAll{$\tilde{\mathcal{C}} \in \{(\mathcal{A}_j\cup\mathcal{A}_k)\vert j = 1,\dots,|\mathcal{A}|-1,k=j+1,\dots,|\mathcal{A}| \} $}
\If{$|\tilde{\mathcal{C}}|\leq$ max coalition size}
\State \label{ln:opt} $\mathrm{val}(\tilde{\mathcal{C}}) \gets V(\mathcal{A}_j)+V(\mathcal{A}_k)-V(\tilde{\mathcal{C}})$
\EndIf
\EndFor
\State $\mathrm{sortedCosts}\gets\mathrm{sort}(\mathrm{val})$
\State $\mathcal{A}_\mathrm{new} \gets \emptyset$
\For{$(\tilde{\mathcal{C}},\mathrm{val})$ in $\mathrm{sortedCosts}$}
    \If{val$(\tilde{\mathcal{C}})>0$ and $\tilde{\mathcal{C}} \cap (\bigcup \mathcal{A}_\mathrm{new}) = \emptyset$}\label{ln:val_check} \Comment{Coalition beneficial and neither agent accounted for}
    \State $\mathcal{A}_\mathrm{new} \gets \mathcal{A}_\mathrm{new} \cup \{\tilde{\mathcal{C}}\}$
        \EndIf

\EndFor
\State $\mathcal{A}_\mathrm{new} \gets \mathcal{A}_\mathrm{new} \cup (\mathcal{A}\setminus(\bigcup \mathcal{A}_\mathrm{new}))$ \label{ln:add_unpaired}\Comment{Add all unpaired agents/coalitions}
\Until{$\mathcal{A}_\mathrm{new}=\mathcal{A}$}
\end{algorithmic}
\end{algorithm}

This algorithm cannot guarantee convergence to the optimal coalition structure, but is significantly less computationally intensive than solving \eqref{eq:coal_brute} directly. 
The number of possible coalition structures of $N$ agents grows according to the Bell numbers and is therefore combinatorial in $N$. 
Consequently, obtaining the optimal coalition structure by exhaustive enumeration quickly becomes intractable. 
Algorithm~\ref{alg:bottom_up} avoids this combinatorial search by iteratively evaluating candidate pairwise merges among current coalitions. 
If $m_k$ denotes the number of coalitions at iteration $k$, then the number of candidate merges at that iteration is $\binom{m_k}{2}=O(m_k^2)$. 


\subsection{Limited Information Coalition Formation}

The approach in Algorithm~\ref{alg:bottom_up} suffers from a crucial drawback, namely the need to solve the complete optimisation problem for every pair of agents, thereby sharing full local information with everyone. 
We wish to avoid sharing data on individual agents' consumption and generation behaviour. 
Not only does this raise privacy concerns, but also introduces a computational burden for every potential coalition.
To remove this requirement, we recognise that \emph{only an upper bound} on the solution to \eqref{eq:opt_in_coal_ind} is needed.
We denote by $\overline{V}(\mathcal{C}_j)$, the upper bound on the joint cost received by coalition $\mathcal{C}_j$
\begin{equation}
\label{eq:priv_UB}
    \begin{aligned}
                \overline{V}(\mathcal{C}_j) \geq V(\mathcal{C}_j).
    \end{aligned}
\end{equation}
Then, we replace Line~\ref{ln:opt}, with
\begin{equation}
    \mathrm{val}(\tilde{\mathcal{C}}) \gets V(\mathcal{A}_j)+V(\mathcal{A}_k)-\overline{V}(\tilde{\mathcal{C}}),
\end{equation}
and proceed with the rest of the algorithm unchanged, forming coalitions only if this new bound is positive.

The tighter this bound (by sharing more information), the better the final coalition structure. 
Hence, this trade-off gives agents a choice between information sharing and optimality. 
Once coalitions are formed, we assume the following (in Section~\ref{sec:ADMM}, we discuss how this can be achieved without sharing private information):
\begin{assumption}[Value Knowledge of Formed Coalitions]
    Once a coalition $\mathcal{C}_j$ is formed, its true value $V(\mathcal{C}_j)$ can be evaluated exactly.
\end{assumption}
We then have:
\begin{theorem}[Limited Information Algorithm Properties]
    \label{thm:improve}
    Consider Algorithm~\ref{alg:bottom_up}, with an upper bound $\overline{V}(\mathcal{C}_j)$ (satisfying \eqref{eq:priv_UB}) on the joint cost in place of solving the optimisation problem in Line~\ref{ln:opt}.
    Then the algorithm will satisfy the following properties
    \begin{enumerate}[wide, labelwidth=!, labelindent=0pt]
        \item The resulting coalition structure satisfies Definition~\ref{def:CS};
        \item The algorithm terminates in a finite number of iterations;
        \item The resulting coalition structure has a value at most equal to the decentralised optimum.
    \end{enumerate}
\end{theorem}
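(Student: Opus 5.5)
The plan is to prove the three properties in turn, each by induction over the outer \texttt{repeat} loop of Algorithm~\ref{alg:bottom_up} with Line~\ref{ln:opt} modified. Throughout, I will take the invariant to be that the current collection $\mathcal{A}$ is a coalition structure in the sense of Definition~\ref{def:CS}, and I will track the quantity $S(\mathcal{A})=\sum_{a\in\mathcal{A}}V(a)$. This quantity is well defined by the Assumption on value knowledge of formed coalitions.

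For property 1, the base case is the initial $\mathcal{A}_\mathrm{new}=\{\mathcal{M}_1,\dots,\mathcal{M}_N\}$. It consists of singletons and is therefore a partition. For the inductive step, every candidate $\tilde{\mathcal{C}}=\mathcal{A}_j\cup\mathcal{A}_k$ is a union of two disjoint blocks of the current partition. The check in Line~\ref{ln:val_check} admits $\tilde{\mathcal{C}}$ only if it is disjoint from everything already in $\mathcal{A}_\mathrm{new}$, so the accepted merges are pairwise disjoint. Line~\ref{ln:add_unpaired} then adds exactly the blocks of $\mathcal{A}$ whose elements are not yet covered. Because each block of $\mathcal{A}$ is either entirely inside one accepted merge or entirely outside all of them, the union covers $\{1,\dots,N\}$ with no overlaps. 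The size cap also carries through, since merges are only considered when $|\tilde{\mathcal{C}}|\le\mathcal{C}_\mathrm{max}$.

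For property 2, I will use a monotone count. Each iteration either accepts no merge, in which case $\mathcal{A}_\mathrm{new}=\mathcal{A}$ and the loop stops, or it accepts at least one, in which case the number of blocks strictly decreases. Since the number of blocks starts at $N$ and is bounded below by $1$, there are at most $N-1$ non-terminating iterations. One subtlety needs care here: the stopping test $\mathcal{A}_\mathrm{new}=\mathcal{A}$ must hold precisely when no merge is accepted. I will argue this from the strict decrease in cardinality whenever a merge is accepted. I expect this bookkeeping, rather than any real difficulty, to be the only delicate point.

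For property 3, I will show that $S$ is non-increasing. Take an accepted merge $\tilde{\mathcal{C}}=\mathcal{A}_j\cup\mathcal{A}_k$. It satisfies $V(\mathcal{A}_j)+V(\mathcal{A}_k)-\overline{V}(\tilde{\mathcal{C}})>0$, and combined with \eqref{eq:priv_UB} this gives $V(\tilde{\mathcal{C}})\le\overline{V}(\tilde{\mathcal{C}})<V(\mathcal{A}_j)+V(\mathcal{A}_k)$. Unmerged blocks contribute the same value as before. Summing over blocks then gives $S(\mathcal{A}_\mathrm{new})\le S(\mathcal{A})$, with strict inequality if any merge occurs. The initial value of $S$ is the singleton partition's sum $\sum_i V(\{i\})$. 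Each $V(\{i\})$ equals the decentralised optimum $\min_{\mathfrak{X}_i}L_i$, because for a singleton the constraint $C_i^\mathrm{tot}(t)=0$ forces $C_i^\mathrm{buy}=C_i^\mathrm{sell}$, so the power balance \eqref{eq:power_balance_in_coal} collapses to \eqref{eq:power_balance}. Chaining the inequalities over the finitely many iterations gives the claim. The key point is that only the exact values $V(\mathcal{A}_j)$ of already-formed coalitions appear on the side being compared, while the bound $\overline{V}$ is needed only for the merged candidate.
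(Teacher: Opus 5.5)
Your proposal is correct and follows essentially the same route as the paper: validity comes from the disjointness check together with the re-insertion of unpaired blocks, finiteness from the strictly decreasing number of coalitions, and the cost claim from chaining $V(\tilde{\mathcal{C}})\le\overline{V}(\tilde{\mathcal{C}})<V(\mathcal{A}_j)+V(\mathcal{A}_k)$. The differences are cosmetic: you bound the number of blocks below by $1$ (at most $N-1$ merging iterations) where the paper uses $\lceil N/\mathcal{C}_\mathrm{max}\rceil$, and you explicitly verify that $V(\{i\})$ equals the decentralised optimum, which the paper leaves implicit.
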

\iftechrep
\begin{proof}
We consider each item in turn:
    \begin{enumerate}[wide, labelwidth=!, labelindent=0pt]
    \item Since a coalition is accepted only if it does not overlap with a coalition already in the coalition structure (Line~\ref{ln:val_check}), and any coalitions not in a pair with another coalition are added at the end (Line~\ref{ln:add_unpaired}), then the resulting coalition structure is valid.
    \item We initialise the algorithm with $M=N$ coalitions (each agent in a singleton coalition). 
    At each iteration, we either form new coalitions (strictly decreasing $M$) or terminate. 
    $M$ is bounded below by $\lceil\frac{N}{\mathcal{C}_\mathrm{max}}\rceil$ and decreases if coalitions form, so the algorithm terminates in at most $N-\lceil\frac{N}{\mathcal{C}_\mathrm{max}}\rceil$ iterations.
        \item Our proposed algorithm forms coalitions between pairs of coalitions $a,b$ only if the upper bound satisfies
    \begin{equation}
    \label{eq:improvement}
    \overline{V}(a \cup b)<V(a)+V(b).
    \end{equation} 
    By assumption, $\overline{V}(a \cup b)$ is a valid upper bound which satisfies \eqref{eq:priv_UB}.
    Combining these inequalities, we have 
    \begin{equation}
        V(a \cup b) \leq \overline{V}(a \cup b) < V(a) + V(b),
    \end{equation}
    so that any formed coalition offers a strict improvement in total cost compared to agents working individually. 
    Since our proposed algorithm forms coalitions iteratively, starting from the decentralised solution, and each coalition formation strictly reduces the total cost, the final coalition structure will have a total cost at most equal to the decentralised solution. 
    Equality holds only when no coalitions are formed.
    \end{enumerate}
    \end{proof}

\fi

\subsubsection{A Practical Bound}
A very computationally cheap option for this bound, that is also non-trivial, is to share only our current planned net grid trades (calculated in a decentralised manner). We then compare the grid consumption at each time step to check if some gain can be made.
First, define
\begin{equation}
    Q_j(t)=\sum_{i\in\mathcal{A}_j}G^\mathrm{tot}_i(t),
\end{equation}
the total grid trade by a coalition.
Then, check if the sequence
\begin{equation}
    \Xi_{jk}(t)=
  Q_j(t) 
  Q_k(t) ,
\end{equation}
contains any negative values (corresponding to one coalition buying (positive $Q$) and the other selling (negative $Q$) at the same time step).

Then, the upper bound on coalition cost $V(\mathcal{A}_j\cup\mathcal{A}_k)$ is 
    \begin{align}
           & V(\mathcal{A}_j)+V(\mathcal{A}_k)    \label{eq:private_val}
\\&
           -\sum_{t\colon\Xi_{jk}(t)<0}\left(P^\mathrm{buy}(t)-P^\mathrm{sell}(t)\right)\min\left\{\left|  Q_j(t)\right|,\left| Q_k(t)\right|\right\}, \nonumber
    \end{align}
i.e. the sum across time steps of possible trades, if no further optimisation is performed. 
From this we can see that
\begin{equation}
    \mathrm{val}(\tilde{\mathcal{C}})=\sum_{t\colon\Xi_{jk}(t)<0}\left(P^\mathrm{buy}(t)-P^\mathrm{sell}(t)\right)\min\left\{\left| Q_j(t)\right|,\left| Q_k(t)\right|\right\}
\end{equation}

\begin{corollary}
    The bound provided in \eqref{eq:private_val} is a non-trivial upper bound to the coalitional cost. 
\end{corollary}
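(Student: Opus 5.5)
We will prove that \eqref{eq:private_val} is an upper bound by exhibiting a feasible point of the merged problem \eqref{eq:opt_in_coal_ind} for $\tilde{\mathcal{C}}=\mathcal{A}_j\cup\mathcal{A}_k$ whose cost equals \eqref{eq:private_val}. Since $V(\tilde{\mathcal{C}})$ is a minimum, it is at most the cost of any feasible point.

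\emph{Construction.} Take the optimal solutions $\{\G^\ast_i,\allu^\ast_i,\C^\ast_i\}$ of the two coalitions $\mathcal{A}_j$ and $\mathcal{A}_k$ separately, as computed in the preceding loop. These attain $V(\mathcal{A}_j)$ and $V(\mathcal{A}_k)$, and $Q_j,Q_k$ are their aggregate grid trades. Keep every battery signal $\allu_i^\ast$ unchanged, so that the SoC constraints remain satisfied.

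\emph{Rerouting at conflicting time steps.} At each $t$ with $\Xi_{jk}(t)<0$, assume without loss of generality $Q_j(t)>0>Q_k(t)$, and let $\delta(t)=\min\{|Q_j(t)|,|Q_k(t)|\}$.
\begin{itemize}
\item Inside $\mathcal{A}_j$, reduce grid purchases by a total of $\delta(t)$. This is possible because the net grid trade is positive, so the total $G^\mathrm{buy}$ is at least $\delta(t)$. Increase the corresponding members' $C_i^\mathrm{buy}(t)$ by the same amounts.
\item Inside $\mathcal{A}_k$, symmetrically reduce $G^\mathrm{sell}$ by a total of $\delta(t)$ and increase $C_i^\mathrm{sell}(t)$ by the same amounts.
\end{itemize}
Each individual power balance \eqref{eq:power_balance_in_coal} is preserved, since the decrease in $G_i^\mathrm{tot}$ is offset by the increase in $C_i^\mathrm{tot}$. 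The coalition balance $\sum_{i\in\tilde{\mathcal{C}}}C_i^\mathrm{tot}(t)=0$ holds because the original internal trades summed to zero within each subcoalition, and the added trades contribute $+\delta(t)-\delta(t)=0$. All variables remain nonnegative.

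\emph{Cost.} At every $t$ with $\Xi_{jk}(t)<0$, the cost changes by $-P^\mathrm{buy}(t)\delta(t)+P^\mathrm{sell}(t)\delta(t)$. At all other times the cost is unchanged. Summing over $t$, the cost of the constructed point equals \eqref{eq:private_val}. Hence $V(\tilde{\mathcal{C}})\le$ \eqref{eq:private_val}, which establishes \eqref{eq:priv_UB}.

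\emph{Non-triviality.} The trivial bound is $V(\mathcal{A}_j)+V(\mathcal{A}_k)$. Since $P^\mathrm{buy}>P^\mathrm{sell}$ by assumption, every term of the subtracted sum is strictly positive whenever some $\Xi_{jk}(t)<0$. Therefore the bound is strictly below the trivial bound in that case, giving $\mathrm{val}(\tilde{\mathcal{C}})>0$ and allowing merges under Theorem~\ref{thm:improve}.

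\emph{Main obstacle.} The delicate step is the decomposition when $Q_j(t)>0$ arises from members with mixed buy and sell trades. We must confirm that a total purchase reduction of $\delta(t)$ is available without making any $G_i^\mathrm{buy}$ negative. This follows from $\sum_i G_i^\mathrm{buy}(t)\ge Q_j(t)\ge\delta(t)$: allocate the reduction greedily across members, capped at each member's own $G_i^\mathrm{buy}(t)$.
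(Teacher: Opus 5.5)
Your proposal is correct and follows the paper's proof: build a feasible merged strategy from the current optimal plans, keep every battery signal fixed, and at each $t$ with $\Xi_{jk}(t)<0$ move $\min\{|Q_j(t)|,|Q_k(t)|\}$ from grid trades to internal coalition trades, which saves $(P^\mathrm{buy}(t)-P^\mathrm{sell}(t))$ per unit. The paper proves only the two-singleton case and appeals to ``linearity'' for general coalitions; you instead treat general coalitions directly, using a greedy allocation that keeps each $G_i^\mathrm{buy},G_i^\mathrm{sell}\ge 0$ and preserves existing internal trades, and you make the strict-positivity (non-triviality) argument explicit, which the paper's proof leaves implicit.
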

\iftechrep
\begin{proof}
    We show that the quantity in \eqref{eq:private_val} is achievable by explicitly constructing a feasible coalition strategy whose cost is exactly no greater than that bound. 
    Since the optimal coalitional cost is no larger than the cost of any feasible strategy, this proves that \eqref{eq:private_val} is an upper bound.
    We first prove the result for two individual buildings $i$ and $j$. The extension to general coalitions follows by linearity of the aggregate net grid exchange, with variables replaced by their coalitional aggregate values (e.g. $Q_i(t)$ in place of $G_i(t)$).
    
    We have that the strategies $\G_i,\allu_i$ and $\G_j,\allu_j$ must already be feasible strategies for agent $i$ and $j$, since they are solutions to the decentralised optimisation (with $\C_i=\C_j=0$).
    We fix $\allu_i$ and $\allu_j$ and do not alter these, and instead focus on shifting trades from the grid to the coalition.
    
    For each time step $t$, if $\Xi_{ij}(t)<0$, one building is buying from the grid while the other is selling at the same time step.
    In the decentralised solution, this results in simultaneous grid purchase at price  $P^\mathrm{buy}(t)$ and grid sale at price $P^\mathrm{sell}(t)$. 
    Thus, we consider moving $\Delta(t)=\min\{|G^\mathrm{tot}_i(t)|,|G^\mathrm{tot}_j(t)|\}$ trade from the grid to the coalition.
    By instead directly exchanging $\Delta(t)$ units of energy between the two buildings, we eliminate both the corresponding grid purchase and sale, yielding a cost reduction of $\Delta(t)(P^\mathrm{buy}(t)-P^\mathrm{sell}(t))$.
    Assume, without loss of generality, that $G^\mathrm{tot}_i(t)>0$, we then have $C^\mathrm{tot}_i(t) = \Delta(t)$, and $C^\mathrm{tot}_j(t) = -C^\mathrm{tot}_i(t)$ so that $C^\mathrm{tot}_i(t)+C^\mathrm{tot}_j(t)=0$, satisfying our coalition power balance. 
    Since $(G_i(t),u_i(t))$ was feasible, and $\hat{G}^\mathrm{tot}_i(t)+C^\mathrm{tot}_i(t)=G^\mathrm{tot}_i(t)$, then the modified strategy $(\hat{\G}_i,\allu_i,\C_i)$ also satisfies the power balance (with demand and generation profiles unchanged).
    
    We now only modify our trades as $\hat{G}^\mathrm{tot}_i(t)=G^\mathrm{tot}_i(t)-C^\mathrm{tot}_i(t)$, and $\hat{G}^\mathrm{tot}_j(t)=G^\mathrm{tot}_j(t)+C^\mathrm{tot}_i(t) = G^\mathrm{tot}_j(t)-C^\mathrm{tot}_j(t)$. 
    Thus, at time step $t$, the combined cost reduction is
    \begin{align}
        &P^\mathrm{buy}(t)\left[G^\mathrm{tot}_i(t)-\hat{G}^\mathrm{tot}_i(t)\right]+P^\mathrm{sell}(t)\left[G^\mathrm{tot}_j(t)-\hat{G}^\mathrm{tot}_j(t)\right]\nonumber\\
        &=\Delta(t)(P^\mathrm{buy}(t)-P^\mathrm{sell}(t)),
    \end{align}
    where the quantity on the left is exactly equal to the cost reduction for buildings $i$ and $j$. 
    After the modification, building $i$ reduces its grid purchase by $\Delta(t)$, and building $j$ reduces its grid sale by $\Delta(t)$.
    By summing across time steps where $\Xi_{ij}(t)<0$, we obtain feasible coalitional cost 
    $$V(\{i\}) + V(\{j\})    - \sum_{t : \Xi_{ij}(t) < 0}
    \left(P^{\mathrm{buy}}(t) - P^{\mathrm{sell}}(t)\right)
   \Delta(t).$$
    Since the optimal coalitional cost $V({i,j})$ is no larger than the cost of any feasible coalition strategy, we obtain the bound in \eqref{eq:private_val}.
    Thus, this bound is achievable and is a valid upper bound on the coalitional cost.
\end{proof}
\fi
We assume zero internal trading costs, in the future we plan to extend to this case by introducing a penalty term in the upper bound which accounts for this.

\subsection{Within Coalition Optimisation}
\label{sec:ADMM}

Once a coalition structure is formed, we must also decide how to optimise the joint strategy of the coalition. 
Solving the optimisation problem centrally raises similar information sharing concerns to Algorithm~\ref{alg:bottom_up}, as it requires full information from all agents.
To address this, we adopt a distributed solution based on ADMM~\cite{DBLP:journals/ftml/BoydPCPE11}, which enables agents to compute their local decisions while coordinating only through limited information exchange. 
We reformulate the coalition optimisation problem in a consensus form, allowing each agent to solve a local subproblem while enforcing agreement on shared variables through a coordinator. 
Full details are available in~\cite{TechRep}.

\iftechrep
We introduce auxiliary variables
$\mathbf{z}_i  \in \mathbb{R}^T$ for each agent $i$, representing the
consensus coalition trade schedule across the prediction horizon, and dual variables $\mathbf{\lambda}_i \in \mathbb{R}^T$.
These variables enforce the coalition power balance constraint $\sum_i z_i(t)=0$ at each time step while allowing agents to solve their optimisation problems locally.
For each agent $i$ in coalition $\mathcal{C}_l$, we make use of the following update steps
\begin{subequations}
    \begin{align}
    \G_{i,k+1}&,\allu_{i,k+1},\C_{i,k+1}\label{eq:ADMM_par}\\
    &= \argmin_{\{\G_i,\allu_i,\C_i\}\in\mathfrak{X}_i }L_i(\G_i)+{\mathbf{\lambda}_{i,k}}^\top \mathbf{C}^\mathrm{tot}_i+\frac{c}{2}\|\mathbf{C}^\mathrm{tot}_i-\mathbf{z}_{i,k}\|^2\nonumber\\
    \mathbf{z}_{i,k+1} &= \mathbf{C}^\mathrm{tot}_{i,k+1} - \frac{1}{|\mathcal{C}_l|}\sum_{j\in\mathcal{C}_l} \mathbf{C}^\mathrm{tot}_{j,k+1}\label{eq:ADMM_cent}\\
    \mathbf{\lambda}_{i,k+1} &= \mathbf{\lambda}_{i,k}+c(\mathbf{C}^\mathrm{tot}_{i,k+1}-\mathbf{z}_{i,k+1})\label{eq:ADMM_dual},    
\end{align}
\end{subequations}
where \eqref{eq:ADMM_par} is performed in parallel by all agents, and (\ref{eq:ADMM_cent},~\ref{eq:ADMM_dual}) are performed centrally (the coordinator initialises $\mathbf{\lambda}_i$ and shares the updated value at each iteration). 
We use $\mathbf{C}^\mathrm{tot}_i$ to denote the stacked vector of coalitional trades for agent $i$, and $c$ is a user-selected regularisation parameter (guidance on selecting this parameter is available in~\cite{DBLP:journals/ftml/BoydPCPE11}).
Note that this requires sharing only tentative coalition trades at each iteration, thereby ensuring that individual consumption and generation data is not disclosed. 
Since the problem is convex and can be written in a consensus form, standard ADMM convergence results apply \cite{DBLP:journals/ftml/BoydPCPE11}

Since ADMM enforces the coalition balance constraint only asymptotically, early termination may result in violations of the constraint $\sum_{i\in \mathcal{C}_j} \mathbf{C}_i^{\mathrm{tot}} = \mathbf{0}$. To ensure feasibility, we take the coalition trades from $\mathbf{z}_{i}$, which satisfy this constraint by construction. 
The local grid trades $G_i$ are then recomputed to restore the power balance constraint for each agent, while satisfying their non negativity constraints and keeping the control inputs $u_i$ fixed. This yields a feasible solution, at the expense of suboptimality.
The local power balance is restored by updating the grid trades while keeping the control inputs $\allu_i$ unchanged, ensuring that the power-balance constraints are satisfied at each time step.
\fi

%% file: sections/4-MPC.tex
\section{Model Predictive Control}
\label{sec:MPC}

Our methods up to this point have dealt with open-loop optimisation of the energy purchasing problem. 
We now consider applying our methods in a model predictive control (MPC) framework. 

We introduce predicted consumption and generation profiles $\hat{d}(t+k|t), \hat{g}(t+k|t)$ for $k = 0,\dots,\min\{T_\mathrm{pred},T-t\}$, where $(\cdot|t)$ denotes a value predicted at step $t$. 
We assume that energy prices and battery dynamics are known precisely, so that this prediction is the only form of uncertainty. 

Then, for every time step, we solve the open-loop optimisation problem using this predicted behaviour. 
We utilise the first planned battery charge signal $u(t)$ and (if applicable) the first coalition trade $C(t)$, and then purchase (or sell) energy as necessary to satisfy the power balance in \eqref{eq:power_balance}. 
Note that, if our predictions are inaccurate, we may trade a different amount of energy with the grid than planned. 
Once we employ a strategy at time step $t$, we uncover the realised stage cost
\begin{equation}
    R(\mathcal{C}_j)(t) = \sum_{i\in\mathcal{C}_j}P^\mathrm{buy}(t)G^\mathrm{buy}_i(t)-P^\mathrm{sell}(t)G^\mathrm{sell}_i(t).
\end{equation}

When employing a coalition formation algorithm, we use predicted data to solve the coalition formation problem. 
Coalitions are reformed at every time step to adapt to changing conditions, and, after coalitions are formed, we check if any violate the following inequality
\begin{equation}
\label{eq:MPC_cond}
    R(\mathcal{C}_j)(t) \leq \sum_{i \in \mathcal{C}_j} R(\{i\})(t).
\end{equation}
If this condition is violated, then this coalition is dissolved and constituent agents work individually. 
In this way, we guarantee that our algorithm performs no worse than the decentralised optimum.
As in the decentralised case, if predictions are inaccurate, we trade different amounts with the grid, maintaining coalition trades and charging signal.

\begin{theorem}[MPC with Coalition Formation]
Consider the predictions $\hat{d}(t|t)=d(t)$, $\hat{g}(t|t)=g(t)$. Let
 $\{\mathcal{C}_j(t)\}_{j=1}^{M(t)}$ be the coalition structure formed at time $t$, and define
\begin{align}
J_{\mathrm{coal}}^{\mathrm{MPC}} &= \sum_{t=1}^T\sum_{j=1}^{M(t)} R(\mathcal{C}_j(t))(t)\\
J_{\mathrm{dec}}^{\mathrm{MPC}}&= \sum_{t=1}^T\sum_{i=1}^N R(\{i\})(t).
\end{align} 
Where, in the coalitional case, any coalitions not satisfying \eqref{eq:MPC_cond} are dissolved and agents work individually on that time step.
Then, despite coalitions being reformed every time step,
\begin{equation}
    J_{\mathrm{coal}}^{\mathrm{MPC}} \leq J_{\mathrm{dec}}^{\mathrm{MPC}}.
\end{equation}
\end{theorem}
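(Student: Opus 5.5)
The plan is a per-time-step comparison followed by summation over $t$. Fix $t$ and the coalition structure $\{\mathcal{C}_j(t)\}_{j=1}^{M(t)}$ obtained after dissolving every coalition that violates \eqref{eq:MPC_cond}. The aim is to establish the stage-wise inequality
\begin{equation*}
\sum_{j=1}^{M(t)} R(\mathcal{C}_j(t))(t) \leq \sum_{i=1}^N R(\{i\})(t).
\end{equation*}
Summing this over $t=1,\dots,T$ gives $J_{\mathrm{coal}}^{\mathrm{MPC}} \leq J_{\mathrm{dec}}^{\mathrm{MPC}}$ directly, since both costs are defined as sums of these stage costs.

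To prove the stage-wise inequality, first note that the final structure at time $t$ is a partition of $\{1,\dots,N\}$. This holds by item 1 of Theorem~\ref{thm:improve}, and dissolving a coalition into singletons keeps it a partition. The right-hand side therefore regroups as $\sum_{j}\sum_{i\in\mathcal{C}_j(t)} R(\{i\})(t)$, and it suffices to compare the two sides coalition by coalition.

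\textbf{Surviving coalitions.} If $\mathcal{C}_j(t)$ survives, then \eqref{eq:MPC_cond} holds for it by the dissolution rule. This gives exactly $R(\mathcal{C}_j(t))(t) \leq \sum_{i\in\mathcal{C}_j(t)}R(\{i\})(t)$.

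\textbf{Singletons.} If $\mathcal{C}_j(t)=\{i\}$, either because it was never merged or because its coalition was dissolved, then the agent acts individually at step $t$. Its realised cost is $R(\{i\})(t)$, and the two sides coincide with equality.

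The main subtlety is whether $R(\{i\})(t)$ in $J_{\mathrm{coal}}^{\mathrm{MPC}}$ equals the quantity in $J_{\mathrm{dec}}^{\mathrm{MPC}}$. The closed-loop battery states differ between the two schemes, because coalitional trades at earlier steps change $\mathrm{SoC}_i$. So the decentralised trajectory is not literally the one being replayed. I would handle this by reading \eqref{eq:MPC_cond} as the paper intends: at each $t$, $R(\{i\})(t)$ is the realised stage cost agent $i$ would incur by acting individually from its current state. Under that reading it is the same object on both sides, and the comparison is pointwise in $t$.

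The hypothesis $\hat d(t|t)=d(t)$, $\hat g(t|t)=g(t)$ is used to ensure the first-step plans are realisable. The applied $u(t)$ and $C(t)$, completed via the power balance \eqref{eq:power_balance_in_coal}, then give well-defined grid trades, so that \eqref{eq:MPC_cond} can actually be checked with realised quantities before commitment.

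I expect the state-coupling issue to be the main obstacle. If a trajectory-level statement against the true decentralised closed loop were required, the per-step argument would not suffice, since a coalition could deplete a battery that the decentralised controller would have used later. My first attempt is to state explicitly that the comparison in \eqref{eq:MPC_cond}, and hence in $J_{\mathrm{dec}}^{\mathrm{MPC}}$, is evaluated along the realised state trajectory. Under that interpretation the proof reduces to the summation argument above.
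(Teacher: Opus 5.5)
Your proposal is correct and follows essentially the same route as the paper: the paper takes the per-step inequality $R(\mathcal{C}_j(t))(t) \leq \sum_{i\in\mathcal{C}_j(t)} R(\{i\})(t)$ enforced by the dissolution rule, sums over coalitions and time steps, and uses implicitly that $\{\mathcal{C}_j(t)\}_{j=1}^{M(t)}$ partitions the agents. The state-coupling issue you raise is not addressed in the paper, whose argument relies on the same reading of $R(\{i\})(t)$ as agent $i$'s individual realised stage cost from its current closed-loop state; your making that interpretation explicit is a useful clarification rather than a departure.
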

\iftechrep
\begin{proof}
At each time step $t$, we enforce: 
\begin{equation}
    R(\mathcal{C}_j(t))(t) \leq \sum_{i \in \mathcal{C}_j(t)} R(\{i\})(t).
\end{equation}
Summing over all time steps and coalitions we have
\begin{equation}
    \sum_{t=1}^T\sum_{j=1}^{M(t)} R(\mathcal{C}_j(t))(t) \leq  \sum_{t=1}^T\sum_{j=1}^{M(t)}\sum_{i \in \mathcal{C}_j(t)} R(\{i\})(t),
\end{equation}
which gives
   $ J_{\mathrm{coal}}^{\mathrm{MPC}} \leq J_{\mathrm{dec}}^{\mathrm{MPC}}$.



\end{proof}
\fi


%% file: sections/5-results.tex
\section{Numerical Results}
\label{sec:exp}

We make use of a real-world dataset~\cite{EMSx,EMSx_article} to perform experiments, with energy usage recorded for 15-minute time slots, predictions for both generation and demand, battery data and energy prices taken from this dataset\footnote{Code is available at \url{https://github.com/lukearcus/Coalition_grid}.}.
Whenever we seek to solve \eqref{eq:opt_in_coal_ind} for $V(\mathcal{C}_j)$, we perform ADMM until the primal residual is less than $10^{-5}$, and use a penalty parameter $c=\frac{1}{2}$, with SCS as numerical solver. 
The experiments thus focus on assessing the impact of limited-information coalition formation, rather than differences in the underlying optimisation solver.
We use a coalition size limit that allows computations to be completed within a reasonable time frame, while offering reasonable cost reductions (for our limited information approach, we compare different coalition sizes in Section~\ref{sec:sizes}).
All simulations were performed using 2 CPU cores at 2.6 GHz and 4 GB RAM.

\subsection{Open Loop Optimisation}
Results are presented in Table~\ref{tab:results} for the open loop optimisation problem using first all 70 buildings, and then 8 buildings, with a maximum coalition size of 7 buildings, and planning for a single day ahead. 
For 70 agents, the optimal coalition problem was terminated after being unable to find a solution within 24 hours. 

As can be seen, our algorithm offers an improvement over the decentralised scheme in both cases, and is even able to perform optimally when solving with 8 buildings. 
Despite the Bottom-Up algorithm outperforming our algorithm for the complete set of buildings, we will see in Section~\ref{sec:scalability}, this comes at a cost of increased computation, and greater information sharing. 
The brute force approach to identify the true optimum coalition requires substantial computation (as discussed in Section~\ref{sec:coal_form}), and was unable to finish within 24 hours for 70 buildings.

\begin{table}[tb]
    \centering
    \caption{Mean cost per building (averaged over the N buildings) for open-loop (TO denotes timeout)}
    \label{tab:results}
    \begin{tabular}{c|c|c}
        & \multicolumn{2}{c}{Cost (euros)}\\
        Scheme & 70 Buildings & 8 Buildings \\ \hline
        Decentralised & 239 & 454 \\
        Centralised & 101 & 430 \\
        Limited Information & 232 & 430 \\
        Bottom-Up  & 157 & 431 \\
        Optimal Coalition & TO & 430 \\
    \end{tabular}
\end{table}

\subsection{Model Predictive Control}

We now compare our techniques against existing methods when implementing a model predictive controller. 
In Table~\ref{tab:results_MPC} we compare the cost achieved, number of ADMM iterations per time step (including both for coalition formation, and within-coalition solving) and runtime, for 10 buildings across the full horizon (1 day) with a 2-hour prediction horizon, and with coalitions restricted to a maximum of 6 buildings. 
We see that our limited information algorithm achieves a lower cost than the decentralised scheme, without incurring a significant computational burden. 
In Figure~\ref{fig:use_compar}, we see how the formation of coalitions affects trades with the grid. 
Under the coalition formation scheme, trades follow a similar envelope but spike toward zero when trades are beneficial as agents trade with each other, not the grid. 
Once again, the brute force approach to find the optimal coalition was too computationally intensive to terminate in time.

\begin{figure*}[h!]
    \centering
    \begin{subfigure}[t]{.5\linewidth}
        \centering
        \includegraphics[width=\linewidth]{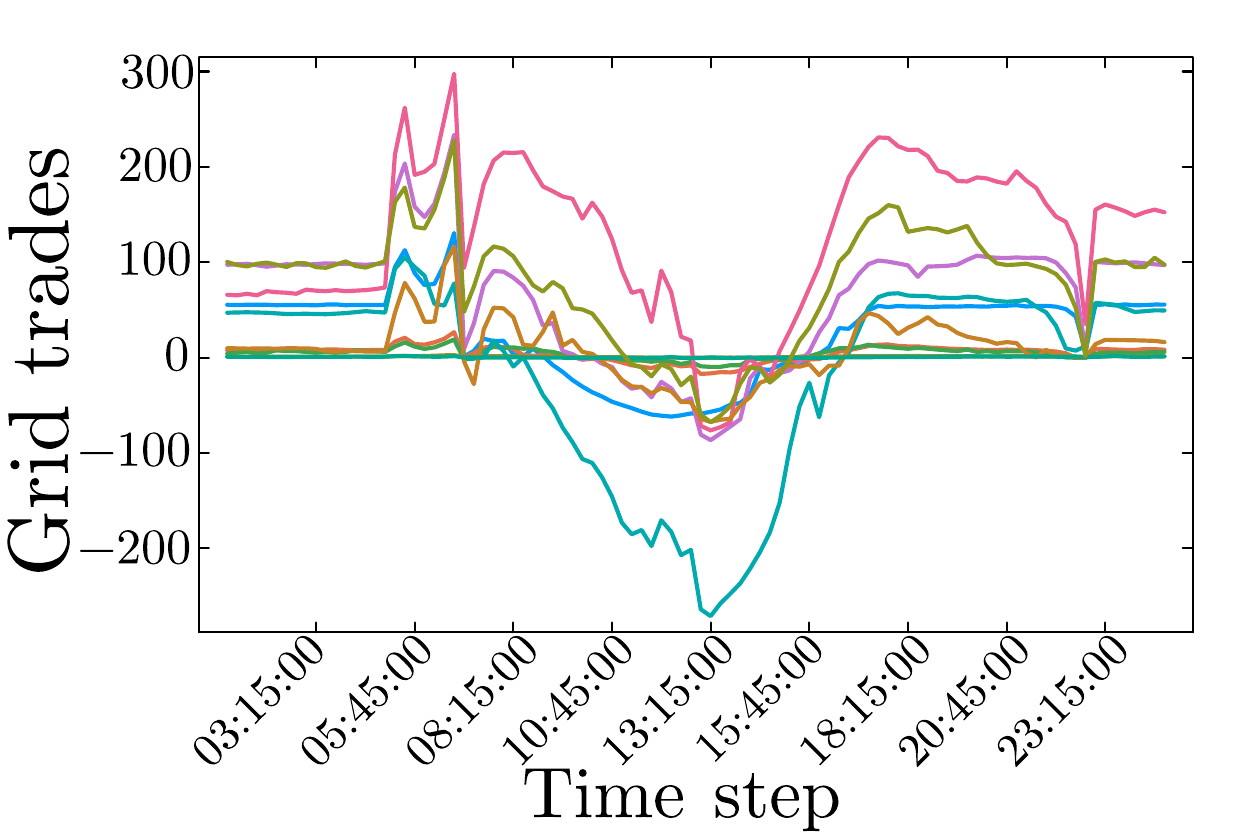}
        \caption{Decentralised}
    \end{subfigure}%
        \begin{subfigure}[t]{.5\linewidth}
        \centering
        \includegraphics[width=\linewidth]{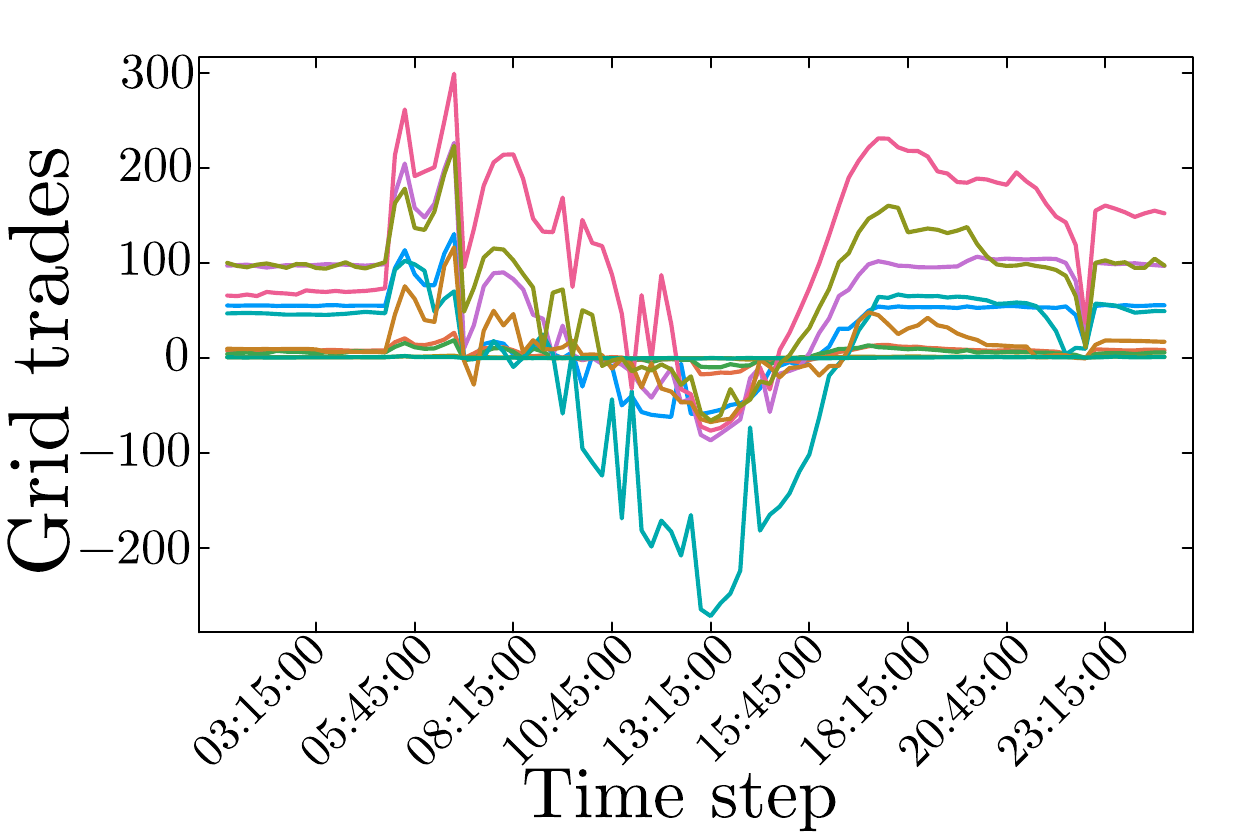}
        \caption{Limited information coalitions}
    \end{subfigure}
    \caption{Grid trades (in kWh, positive buy, negative sell) for a day under decentralised and limited information coalitions.}
    \label{fig:use_compar}
\end{figure*}

\begin{table}[tb]
    \centering
    \caption{Mean cost per building (averaged over the 10 buildings) and mean ADMM iterations (averaged over timesteps) for MPC (TO denotes timeout)}
    \label{tab:results_MPC}
    \begin{tabular}{c|c|c|c}
        Scheme & Cost (euros) & ADMM Iterations & Runtime (s) \\ \hline
        Decentralised & 496 & - & 17 \\
        Centralised & 475 & 14986 & 50700 \\
        Limited Information & 489 & 1362 & 4067 \\
        Bottom-Up & 478 & 18652 & 63432 \\
        Optimal Coalition & TO & - & - \\
    \end{tabular}
\end{table}

\subsection{Scalability}
\label{sec:scalability}

In Figure~\ref{fig:var_builds}, we provide a comparison of the average number of ADMM iterations needed per time step for a varying number of buildings $N$, using the same setup for MPC as the previous section. 
We allowed a maximum of 72 hours total computation time for each algorithm. 
As can be seen, our limited information coalition formation technique outperforms both other algorithms as the number of buildings becomes larger, and is still able to maintain relatively few iterations until convergence, even with 24 buildings. 

\begin{figure}[h]
    \centering
    \includegraphics[width=\linewidth]{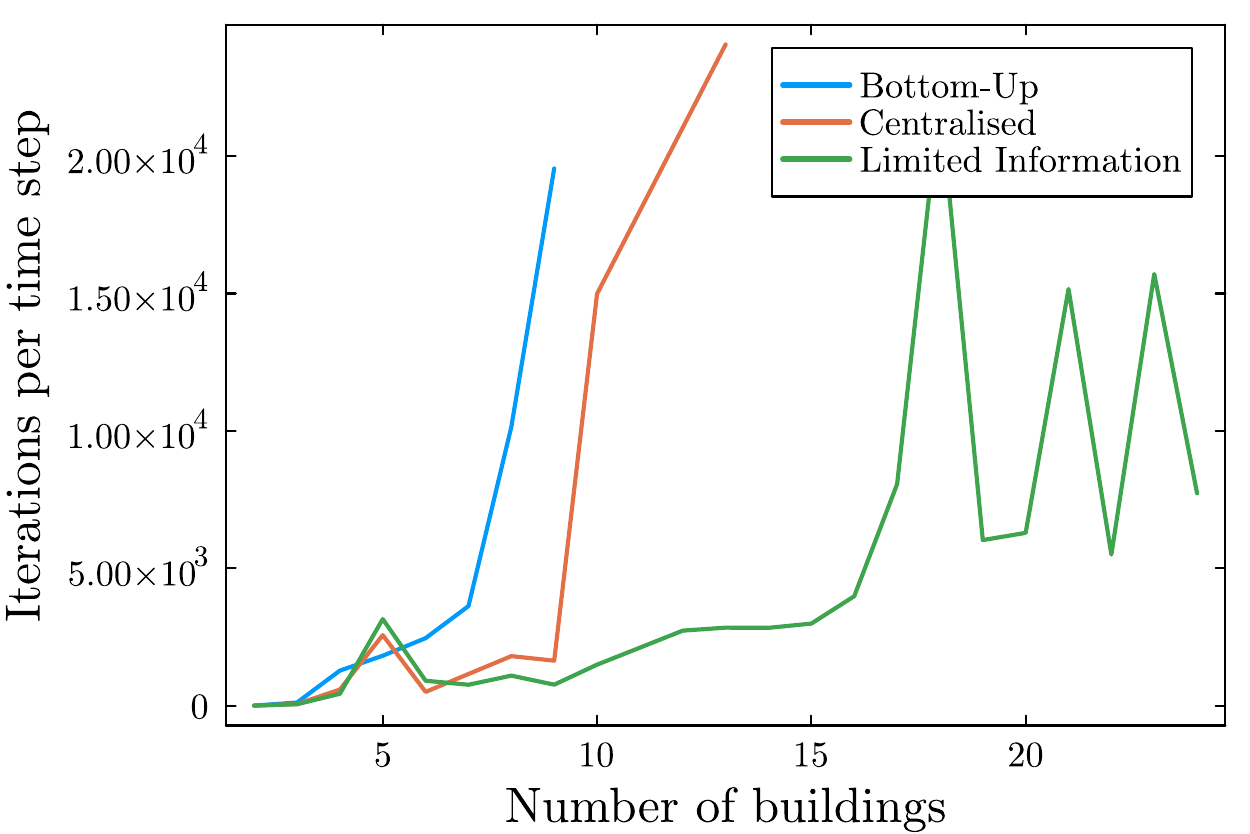}
    \caption{Number of ADMM iterations vs Number of buildings for different schemes}
    \label{fig:var_builds}
\end{figure}

\subsection{Coalition Size}
\label{sec:sizes}

For our proposed limited information scheme, we compare the effects of different choices for maximum coalition size in Figure~\ref{fig:size}. 
We used 20 buildings, a time horizon of 1 day, and a 2-hour lookahead.
As we expect, the larger the coalition the more computationally intensive the problem becomes, but the lower the average cost. 
This suggests that a trade-off between coalition size and cost should be sought. 
For our previous experiments, we identified a maximum coalition size of six as offering a reasonable trade-off.
\begin{figure}[hb]
    \centering
    \includegraphics[width=\linewidth]{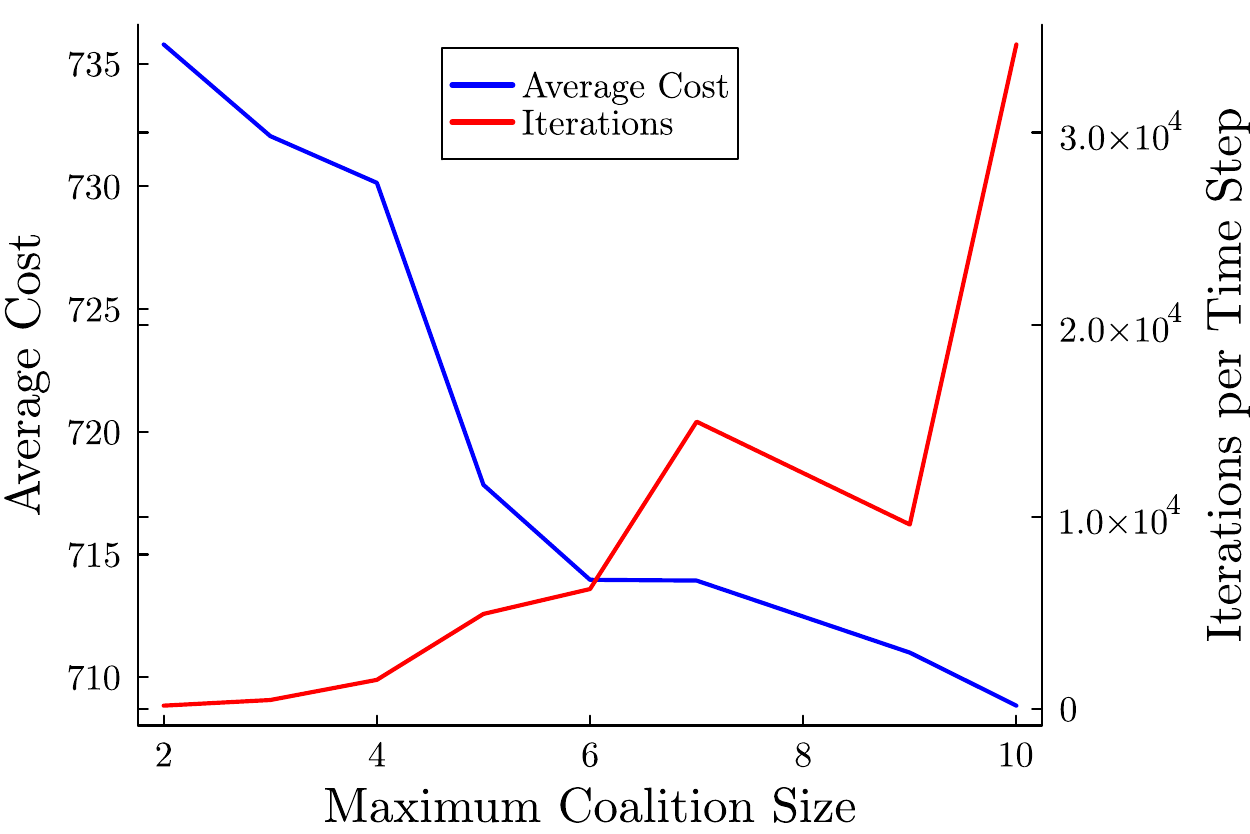}
    \caption{Effect of maximum coalition size on number of ADMM iterations, and average cost.}
    \label{fig:size}
\end{figure}

%% file: sections/6-conclusion.tex
\section{Conclusions}

Our results demonstrate that coalitional formation is achievable with limited information sharing while maintaining improved performance over the decentralised case. 
These preliminary results raise several avenues for future work, including analysing the trade-off between information sharing and optimality, exploring stronger guarantees than improvement over the decentralised optimum, extending to more general coalition formation problems, addressing uncertainty in future predictions for MPC, and providing formal privacy guarantees and analysis.